\documentclass[superscriptaddress]{revtex4}
\usepackage{braket}
\usepackage{amsmath}
\usepackage{amssymb}
\usepackage{amsthm}
\usepackage{graphicx}
\usepackage[colorlinks,
            linkcolor=blue, 
            anchorcolor=blue,
            citecolor=blue]{hyperref}
\usepackage{appendix}
\theoremstyle{definition}
\newtheorem{theorem}{Theorem}
\newtheorem{lemma}{Lemma}

\newtheorem*{example}{Example}

\newtheorem*{conjecture}{Conjecture}

\begin{document}
\title{Estimating the concurrence for quantum states via symmetric measurements}
\author{Hao-Fan Wang}
\affiliation{LMIB(Beihang University), Ministry of Education, and School of Mathematical Sciences, Beihang University, Beijing 100191, China}
\author{Wen Zhou}
\affiliation{School of Mathematical Sciences, Capital Normal University, Beijing 100048, China}
\author{Lin Chen}
\email{linchen@buaa.edu.cn(corresponding author)}
\affiliation{LMIB(Beihang University), Ministry of Education, and School of Mathematical Sciences, Beihang University, Beijing 100191, China}
\author{Shao-Ming Fei}
\email{feishm@cnu.edu.cn}
\affiliation{School of Mathematical Sciences, Capital Normal University, Beijing 100048, China}
\begin{abstract}
We derive improved lower bounds of concurrence induced by symmetric measurements, which retains experimental feasibility without state tomography. More importantly, we resolve a related inequality conjecture, which implies that numerous previous results based on symmetric measurements are strictly stronger than the one based on realignment. In addition, we also present a lower bound of genuine tripartite entanglement concurrence based on symmetric measurements. 
\end{abstract}

\maketitle

\section{Introduction}
A pivotal issue in the theory of quantum entanglement is the quantification and estimation of entanglement for composite systems. One of the most well-known entanglement measures is concurrence\cite{Rungta}, which has the analytic formula for two-qubit quantum states\cite{Wootters} and is closely related to the entanglement of formation\cite{Hill}. However, in high-dimensional systems, due to the infimum involved in the calculation, analytical formulas for concurrence are obtained only for some very specific quantum states\cite{isoconcurrence,Wernerconcurrence}. \par

Instead of the analytical formula for concurrence, much progress has been made toward the lower bounds of concurrence for general bipartite states. In 2005, a lower bound of concurrence\cite{LowBound0} was proposed based on positive partial transpose (PPT) criterion\cite{PPT1,PPT2} and realignment criterion\cite{CCNR1,CCNR2}. Subsequently, the lower bound of concurrence based on local uncertainty relations and correlation matrix was introduced \cite{LB1}. Since then various lower bounds have been derived using various methods\cite{LB2,LB3,LB4,LB5,LB6,LB7}, including the ones based on the extended realigned matrix\cite{LBER1,LBER2}. Recently, the lower bound of concurrence based on symmetric measurements was proposed\cite{HFWang1} by employing the conical 2-design property, and this bound has been generalized to versions of generalized symmetric measurements\cite{LBGSM} and generalized equiangular measurements\cite{LBGEM} by using a similar method.\par

Symmetric informationally complete positive operator-valued measure (SIC POVM)\cite{SIC} and complete set of mutually unbiased bases(MUBs)\cite{MUB} are the two most famous examples of complex projective 2-designs, which are further generalized to general SIC POVM (GSIC POVM)\cite{GSIC} and mutually unbiased measurements (MUMs)\cite{MUM}. In fact, Graydon and Appleby proposed the notion of conical 2-designs\cite{Graydon1} as a generalization of complex projective 2-designs. Both GSIC POVM and informationally complete MUMs are the examples of this notion\cite{Graydon2,KWang}. The informationally complete $(N,M)$-POVM, also called symmetric measurements\cite{SM0}, is the important example of conical 2-designs\cite{SMdesign1,SMdesign2}, which reduces to GSIC POVM and MUMs for special cases. It has been shown in Refs.\cite{HFWang1,HFWang2,LBGSM} that any conical 2-design can be used to give a lower bound of concurrence and a Schmidt number criterion.\par

In this work, we first derive an extended version of lower bound of concurrence induced by symmetric measurements, which can also be experimentally identified without state tomography, and provide a detailed example to show that it can tighter than the original one on some states. Next, we solve the Conjecture in Ref.\cite{HFWang1}, which means that many results based on symmetric measurements are strictly stronger than the one based on realignment, and further propose several new problems about lower bounds of concurrence induced by conical 2-designs. In addition, we also present a lower bound of genuine tripartite entanglement concurrence based on symmetric measurements. 

\section{Preliminary}
We first recall the definition and some properties of $(N,M)$-POVM \cite{SM0}. An $(N,M)$-POVM is a collection of $N$ $d$-dimensional POVMs $\{E_{\alpha,k}|k=1,2\cdots,M\}$ ($\alpha=1,2,\cdots,N$) satisfying the following conditions,
\begin{equation}\label{eq:(N,M)-POVM}
    \begin{split}
    {\rm tr}(E_{\alpha,k}) &= \dfrac{d}{M}, \\
	{\rm tr}(E_{\alpha,k}^{2}) &= x,\\
	{\rm tr}(E_{\alpha,k}E_{\alpha,l}) &= \dfrac{d-Mx}{M(M-1)},~~ l\neq k\\
	{\rm tr}(E_{\alpha,k}E_{\beta,l}) &= \dfrac{d}{M^{2}},~~ \beta\neq\alpha
    \end{split}
\end{equation}
where the parameter $x$ satisfies $\dfrac{d}{M^{2}}<x\leq \min\left\{\dfrac{d^{2}}{M^{2}},\dfrac{d}{M}\right\}$. When $N(M-1)=d^{2}-1$, the $(N,M)$-POVM is called an informationally complete $(N,M)$-POVM.
For any finite dimension $d$ ($d>2$), there exist at least four distinct classes of informationally complete $(N,M)$-POVM: (1) $N=1$ and $M=d^{2}$ (GSIC POVM), (2) $N=d+1$ and $M=d$ (MUMs), (3) $N=d^{2}-1$ and $M=2$, (4) $N=d-1$ and $M=d+2$.

From orthonormal Hermitian operator basis $\{G_{0}=I_{d}/\sqrt{d},\,G_{\alpha,k}|\alpha=1,\cdots,N;\,k=1,\cdots,M-1\}$ with ${\rm tr}(G_{\alpha,k})=0$, an informationally complete $(N,M)$-POVM is given by
\begin{equation}
	E_{\alpha,k}=\dfrac{1}{M}I_{d}+tH_{\alpha,k},
\end{equation}
where
\begin{equation}
	H_{\alpha,k}=\begin{cases}
		G_{\alpha}-\sqrt{M}(\sqrt{M}+1)G_{\alpha,k},~ k=1,\cdots,M-1\\
		(\sqrt{M}+1)G_{\alpha}, k=M
	\end{cases}
\end{equation}
with $G_{\alpha}=\sum\limits_{k=1}^{M-1}G_{\alpha,k}$. The parameter $t$ should be chosen such that $E_{\alpha,k}\geq 0$, which is equivalent to
\begin{equation}
	-\dfrac{1}{M}\dfrac{1}{\lambda_{\max}}\leq t \leq \dfrac{1}{M}\dfrac{1}{|\lambda_{\min}|},
\end{equation}
where $\lambda_{\max}$ and $\lambda_{\min}$ are the maximal and minimal eigenvalue
from among all eigenvalues of $H_{\alpha,k}$, respectively. (Since ${\rm tr}(H_{\alpha,k})=0$, these operators must have both negative and positive eigenvalues.) The parameters $t$ and $x$ satisfy the following relation,
\begin{equation}
	x=\dfrac{d}{M^{2}}+t^{2}(M-1)(\sqrt{M}+1)^{2}.
\end{equation}

Next, we recall the definition of concurrence\cite{LowBound0}. Let $\mathcal{H}_{A}$ and $\mathcal{H}_{B}$ are Hilbert spaces with dimensions $d_{A}$ and $d_{B}$, respectively. The concurrence of a pure state $\ket{\psi}\in\mathcal{H}_{A}\otimes\mathcal{H}_{B}$ is defined by
\begin{equation}
    C(\ket{\psi})=\sqrt{2[1-{\rm tr}(\rho_{A}^{2})]}
\end{equation}
where $\rho_{A}={\rm tr}_{B}(\ket{\psi}\bra{\psi})$ is the reduced state obtained by the partial trace computed over subsystem $B$. The concurrence is extended to mixed states $\rho$ by the convex roof extension, 
\begin{equation}
    C(\rho)=\min\limits_{\{p_{i},\ket{\psi_{i}}\}}\sum_{i}p_{i}C(\ket{\psi}),
\end{equation}
where the minimum is taken over all possible pure state decompositions of $\rho=\sum\limits_{i}p_{i}\ket{\psi_{i}}\bra{\psi_{i}}$, with $p_{i}\geq 0$ and $\sum\limits_{i}p_{i}=1$.

\section{Improved lower bounds of concurrence induced by symmetric measurements}
Let $\mathcal{H}_{A}$ and $\mathcal{H}_{B}$ be $d_{A}$-dimensional and $d_{B}$-dimensional Hilbert spaces, respectively. We consider a bipartite state $\rho$ acting on $\mathcal{H}_{A}\otimes\mathcal{H}_{B}$, and two sets of informationally complete symmetric measurements, which are $\{E_{\alpha,k}^{A}|\alpha=1,\cdots,N_{A};\,k=1,\cdots,M_{A}\}$ with the free parameter $x_{A}$ on $\mathcal{H}_{A}$ and $\{E_{\beta,l}^{B}|\beta=1,\cdots,N_{B};\,l=1,\cdots,M_{B}\}$ with the free parameter $x_{B}$ on $\mathcal{H}_{B}$. With respect to the measurement operators $E_{\alpha,k}^{A}\otimes E_{\beta,l}^{B}$, we denote the probabilities $p_{\alpha,k;\beta,l}={\rm tr}\left(\rho\left(E_{\alpha,k}^{A}\otimes E_{\beta,l}^{B}\right)\right)$ and define the matrix $\mathcal{P}(\rho)=\left(p_{\alpha,k;\beta,l}\right)_{N_{A}M_{A}\times N_{B}M_{B}}$. Next, we denote 
\begin{equation}\label{key matrix}
    \mathcal{M}_{\mu, \nu}(\rho)=\begin{pmatrix}
	\mu\nu^{\mathsf{T}} & \mu b_{\rho}^{\mathsf{T}}\\
	a_{\rho}\nu^{\mathsf{T}} & \mathcal{P}(\rho)
\end{pmatrix}
\end{equation}
where $\mu\in\mathbb{R}^{m}$, $\nu\in\mathbb{R}^{n}$, $a_{\rho}=\begin{pmatrix}
{\rm tr}(\rho_{A}E_{1,1}^{A})\\
{\rm tr}(\rho_{A}E_{1,2}^{A})\\
\cdots\\
{\rm tr}(\rho_{A}E_{N_{A},M_{A}}^{A})
\end{pmatrix}$, $b_{\rho}=\begin{pmatrix}
{\rm tr}(\rho_{B}E_{1,1}^{B})\\
{\rm tr}(\rho_{B}E_{1,2}^{B})\\
\cdots\\
{\rm tr}(\rho_{B}E_{N_{B},M_{B}}^{B})
\end{pmatrix}$, $\rho_{A}={\rm tr}_{B}(\rho)$, $\rho_{B}={\rm tr}_{A}(\rho)$.

\begin{theorem}\label{Result1}
    For a bipartite state $\rho$ acting on $\mathcal{H}_{A}\otimes\mathcal{H}_{B}$, the concurrence $C(\rho)$ satisfies
    \begin{equation}\label{result1}
    C(\rho)\geq K\sqrt{\dfrac{2}{d(d-1)}}\left(\|\mathcal{M}_{\mu,\nu}(\rho)\|_{\rm tr}-\sqrt{|\mu|^{2}+\dfrac{(d_{A}-1)(M_{A}^{2}x_{A}+d_{A}^{2})}{d_{A}M_{A}(M_{A}-1)}}\sqrt{|\nu|^{2}+\dfrac{(d_{B}-1)(M_{B}^{2}x_{B}+d_{B}^{2})}{d_{B}M_{B}(M_{B}-1)}}\right),
    \end{equation}
    where $K=\sqrt{\frac{M_{A}(M_{A}-1)M_{B}(M_{B}-1)}{(x_{A}M_{A}^{2}-d_{A})(x_{B}M_{B}^{2}-d_{B})}}$ and $d=\min\{d_{A},d_{B}\}$.
\end{theorem}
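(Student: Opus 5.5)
Our plan follows the standard route for concurrence bounds of this type, as in Ref.~\cite{HFWang1}. We first prove the inequality for pure states, using the Schmidt decomposition, and then pass to mixed states using convexity of the trace norm together with the convex roof.

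\textbf{Pure-state bound.} Let $\ket{\psi}=\sum_{i}\sqrt{\lambda_i}\ket{ii}$ be a Schmidt decomposition. Then $C(\ket{\psi})^2=2(1-\sum_i\lambda_i^2)=4\sum_{i<j}\lambda_i\lambda_j$. Since there are at most $d(d-1)/2$ pairs $i<j$, Cauchy--Schwarz gives
\begin{equation*}
\Big(\sum_i\sqrt{\lambda_i}\Big)^2-1=2\sum_{i<j}\sqrt{\lambda_i\lambda_j}\le \sqrt{2d(d-1)}\,\sqrt{\sum_{i<j}\lambda_i\lambda_j}.
\end{equation*}
This yields $C(\ket{\psi})\geq\sqrt{\frac{2}{d(d-1)}}\big((\sum_i\sqrt{\lambda_i})^2-1\big)$. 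It therefore suffices to show
\begin{equation*}
K\big(\|\mathcal{M}_{\mu,\nu}(\psi)\|_{\rm tr}-\sqrt{|\mu|^2+c_A}\sqrt{|\nu|^2+c_B}\big)\le\Big(\sum_i\sqrt{\lambda_i}\Big)^2-1,
\end{equation*}
where $c_A,c_B$ denote the constants in the square roots of the statement.

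\textbf{Factorization of $\mathcal{M}$.} Write $\ket{\psi}\bra{\psi}=\sum_{i,j}\sqrt{\lambda_i\lambda_j}\ket{i}\bra{j}\otimes\ket{i}\bra{j}$. Then $p_{\alpha,k;\beta,l}=\sum_{ij}\sqrt{\lambda_i\lambda_j}\,{\rm tr}(E^A_{\alpha,k}\ket{i}\bra{j})\,{\rm tr}(E^B_{\beta,l}\ket{i}\bra{j})$. Collect the vectors $u_{ij}=({\rm tr}(E^A_{\alpha,k}\ket{i}\bra{j}))_{\alpha,k}$ and $v_{ij}$ analogously. Adjoin to $u_{ij}$ a first block equal to $\mu\,\delta_{ij}$, which gives $\tilde u_{ij}$, and likewise adjoin $\nu\,\delta_{ij}$ to $v_{ij}$, which gives $\tilde v_{ij}$. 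This produces
\begin{equation*}
\mathcal{M}_{\mu,\nu}=\sum_{ij}\sqrt{\lambda_i\lambda_j}\,\tilde u_{ij}\tilde v_{ij}^{\mathsf T}.
\end{equation*}
The check is as follows. The top-left block is $\sum_i\lambda_i\mu\nu^{\mathsf T}=\mu\nu^{\mathsf T}$. The off-diagonal blocks are $\mu\sum_i\lambda_i v_{ii}^{\mathsf T}=\mu b^{\mathsf T}$ and, symmetrically, $a\nu^{\mathsf T}$, because $\rho_A=\sum\lambda_i\ket{i}\bra{i}$. The bottom-right block is $\mathcal{P}$.

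\textbf{Splitting off the identity part.} It is convenient to subtract the "identity part". The conical 2-design property gives
\begin{equation*}
\sum_{\alpha,k}{\rm tr}(E_{\alpha,k}X){\rm tr}(E_{\alpha,k}Y)=\kappa_A\,{\rm tr}(XY)+\eta_A\,{\rm tr}X\,{\rm tr}Y,
\end{equation*}
where $\kappa_A=\frac{x_AM_A^2-d_A}{M_A(M_A-1)}$ and $\eta_A$ is explicit, obtained from \eqref{eq:(N,M)-POVM}. Split $\mathcal{M}=\sum_{i\ne j}\sqrt{\lambda_i\lambda_j}\tilde u_{ij}\tilde v_{ij}^{\mathsf T}+\sum_i\lambda_i\tilde u_{ii}\tilde v_{ii}^{\mathsf T}$ and apply the triangle inequality. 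The trace norm of a rank-one term is bounded by $|\tilde u||\tilde v|$.
\begin{itemize}
\item For $i\neq j$: $|\tilde u_{ij}|^2=|u_{ij}|^2=\kappa_A$. Hence the off-diagonal part contributes at most $\sqrt{\kappa_A\kappa_B}\sum_{i\ne j}\sqrt{\lambda_i\lambda_j}=K^{-1}\big((\sum\sqrt{\lambda_i})^2-1\big)$.
\item For the diagonal terms: $|\tilde u_{ii}|^2=|\mu|^2+\kappa_A+\eta_A$. This must be reconciled with $|\mu|^2+c_A$ and the extra $-K^{-1}$ correction, through the identity
\begin{equation*}
\sqrt{|\mu|^2+\kappa_A+\eta_A}\sqrt{|\nu|^2+\kappa_B+\eta_B}=\sqrt{|\mu|^2+c_A}\sqrt{|\nu|^2+c_B}+\sqrt{\kappa_A\kappa_B}\quad(?)
\end{equation*}
This identity does not hold literally.
\end{itemize}

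\textbf{Main obstacle: the constants.} Matching the constants is where we expect the real difficulty. Rather than bounding the diagonal rank-one terms separately, we plan to project onto the trace-zero part. Write $\tilde u_{ii}=w_A+u'_{ii}$, where $w_A=(\mu,\frac{1}{M_A}\cdots)$ corresponds to the maximally mixed component, and $u'_{ii}$ comes from $\ket{i}\bra{i}-I/d_A$. Then $\sum_i\lambda_i\tilde u_{ii}\tilde v_{ii}^{\mathsf T}$ equals a combination whose trace norm can be bounded using $\sum_i\lambda_i=1$, and in which the trace-zero cross terms merge with the off-diagonal sum. Terms of the form $\kappa(1-1/d)$ then appear. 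We expect $\frac{(d_A-1)(M_A^2x_A+d_A^2)}{d_AM_A(M_A-1)}$ to arise exactly as $|w_A|^2-|\mu|^2$ plus $(1-1/d_A)$ times the norm, minus the $\kappa$ contribution absorbed into $(\sum\sqrt{\lambda_i})^2-1$. This computation, adjusted to achieve the exact constant, is the key step.

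\textbf{Mixed states.} Take an optimal decomposition $\rho=\sum p_k\psi_k$. The map $\rho\mapsto\mathcal{M}_{\mu,\nu}(\rho)$ is affine, with the block $\mu\nu^{\mathsf T}$ fixed and weights summing to one. Hence $\|\mathcal{M}(\rho)\|_{\rm tr}\le\sum p_k\|\mathcal{M}(\psi_k)\|_{\rm tr}$. Combining this with the pure-state bound gives \eqref{result1}.
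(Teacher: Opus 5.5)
\textbf{Gap in the diagonal step.} Your overall architecture coincides with the paper's. The diagonal part $\sum_i\lambda_i\tilde u_{ii}\tilde v_{ii}^{\mathsf T}$ is exactly $\mathcal{M}_{\mu,\nu}(\sigma)$ for the separable state $\sigma=\sum_i\lambda_i\ket{ii}\bra{ii}$. The off-diagonal part is $\mathcal{P}(\psi)-\mathcal{P}(\sigma)$ padded with zeros. Your rank-one estimate $K^{-1}\big((\sum_i\sqrt{\lambda_i})^2-1\big)$ for the off-diagonal part is correct; the paper obtains the same number as an equality via orthogonality of $\{u_{ij}\}_{i\neq j}$. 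The gap is in the diagonal step. There you declare a mismatch of constants and write down the identity marked ``(?)'', which indeed fails. You then replace the argument by an unspecified ``projection onto the trace-zero part'' whose outcome you only expect. As written, the pure-state inequality is therefore not proved. The proposed workaround is also not well-defined: the off-diagonal estimate has already been used in full, so nothing can be ``merged'' into it.

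\textbf{The obstacle is spurious.} It comes from a bookkeeping slip. The off-diagonal sum already accounts for the entire term $K^{-1}\big((\sum_i\sqrt{\lambda_i})^2-1\big)$. So all you need from the diagonal part is
\[
\Big\|\sum_i\lambda_i\tilde u_{ii}\tilde v_{ii}^{\mathsf T}\Big\|_{\rm tr}\le\sqrt{|\mu|^2+c_A}\sqrt{|\nu|^2+c_B},
\]
with no extra $\sqrt{\kappa_A\kappa_B}$. Your naive triangle-inequality bound gives precisely this once you actually compute $\eta_A$. From $\sum_{\alpha,k}E^A_{\alpha,k}\otimes E^A_{\alpha,k}=\kappa_AF+\eta_AI$, take the trace against $I$ and against $F$ and use $N_A(M_A-1)=d_A^2-1$. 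This forces $\eta_A=\frac{d_A^3-x_AM_A^2}{d_AM_A(M_A-1)}$, and hence
\[
\kappa_A+\eta_A=\frac{d_A(x_AM_A^2-d_A)+d_A^3-x_AM_A^2}{d_AM_A(M_A-1)}=\frac{(d_A-1)(x_AM_A^2+d_A^2)}{d_AM_A(M_A-1)}=c_A .
\]
Therefore $|\tilde u_{ii}|=\sqrt{|\mu|^2+c_A}$ and likewise $|\tilde v_{ii}|=\sqrt{|\nu|^2+c_B}$. The triangle inequality together with $\sum_i\lambda_i=1$ then closes the argument. This is exactly the content of the separability bound, Theorem 1 of Ref.~\cite{YLu}, which the paper invokes for $\|\mathcal{M}_{\mu,\nu}(\sigma)\|_{\rm tr}$. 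With this line inserted and the vague trace-zero projection discarded, your proof becomes the paper's proof.
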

\begin{proof}
    Let $\{p_i,\ket{\psi_i}\}$ be optimal pure state decomposition of $\rho$ such that $C(\rho)=\sum\limits_{i}p_{i}C(\ket{\psi_i})$. Noting that $\sum\limits_{i}p_{i}\|\mathcal{M}_{\mu,\nu}(\ket{\psi_{i}}\bra{\psi_{i}})\|_{\rm tr}\geq \|\mathcal{M}_{\mu,\nu}(\rho)\|_{\rm tr}$ by the convex property of the trace norm, we only need to prove the theorem for the case of pure states $\ket{\psi}\in \mathcal{H}\otimes \mathcal{H}$,
    \begin{equation*}
     C(\ket{\psi})\geq K\sqrt{\dfrac{2}{d(d-1)}}
     \left(\|\mathcal{M_{\mu,\nu}}(\ket{\psi}\bra{\psi})\|_{\rm tr}-\sqrt{|\mu|^{2}+\dfrac{(d_{A}-1)(M_{A}^{2}x_{A}+d_{A}^{2})}{d_{A}M_{A}(M_{A}-1)}}\sqrt{|\nu|^{2}+\dfrac{(d_{B}-1)(M_{B}^{2}x_{B}+d_{B}^{2})}{d_{B}M_{B}(M_{B}-1)}}\right).
    \end{equation*}
    \noindent By the Schmidt decomposition, there exists a positive integer $r\leq d$,  and orthonormal bases $\{\ket{e_{i}}\}_{i=1}^{d_{A}}$ in $\mathcal{H}_{A}$ and $\{\ket{f_{j}}\}_{j=1}^{d_{B}}$ in $\mathcal{H}_{B}$ such that $\ket{\psi}=\sum\limits_{i=1}^{r}\lambda_{i}\ket{e_{i}}\otimes\ket{f_{i}}$, where $\lambda_{i}>0$ and $\sum\limits_{i=1}^{r}\lambda_{i}^{2}=1$. Denote $\sigma=\sum\limits_{i=1}^{r}\lambda_{i}^{2}\ket{e_{i}f_{i}}\bra{e_{i}f_{i}}$. Obviously, we have ${\rm tr}_{B}(\sigma)={\rm tr}_{B}(\ket{\psi}\bra{\psi})$ and ${\rm tr}_{A}(\sigma)={\rm tr}_{A}(\ket{\psi}\bra{\psi})$. Then $\mathcal{M}_{\mu, \nu}(\ket{\psi}\bra{\psi})=\mathcal{M}_{\mu, \nu}(\sigma)+\begin{pmatrix}
	0 & 0\\
	0 & \mathcal{P}(\rho)-\mathcal{P}(\sigma)
    \end{pmatrix}$. Thus, we have 
    \begin{equation}
        \|\mathcal{M}_{\mu, \nu}(\ket{\psi}\bra{\psi})\|_{\rm tr}\leq \|\mathcal{M}_{\mu, \nu}(\sigma)\|_{\rm tr}+\|\mathcal{P}(\rho)-\mathcal{P}(\sigma)\|_{\rm tr}.
    \end{equation}
    Let $\{\ket{h_{\alpha,k}}|\alpha=1,\cdots,N_{A};k=1,\cdots,M_{A}\}$ and $\{\ket{w_{\beta,l}}|\beta=1,\cdots,N_{B};l=1,\cdots,M_{B}\}$ be orthonormal bases of $\mathbb{C}^{N_{A}M_{A}}$ and $\mathbb{C}^{N_{B}M_{B}}$, respectively. Define $\ket{u_{ij}}=\sum\limits_{\alpha=1}^{N_{A}}\sum\limits_{k=1}^{M_{A}}\braket{e_j|E_{\alpha,k}^{A}|e_{i}}\ket{h_{\alpha,k}}$, $\ket{v_{ij}}=\sum\limits_{\beta=1}^{N_{B}}\sum\limits_{l=1}^{M_{B}}\braket{f_j|E_{\beta,l}^{B}|f_{i}}\ket{w_{\beta,l}}$, $\mathcal{P}^{\prime}(\rho)=\sum\limits_{\alpha=1}^{N_{A}}\sum\limits_{\beta=1}^{N_{B}}\sum\limits_{k=1}^{M_{A}}\sum\limits_{l=1}^{M_{B}}{\rm tr}(\rho(E_{\alpha,k}^{A}\otimes E_{\beta,l}^{B}))\ket{h_{\alpha,k}}\bra{w_{\beta,l}}$. It is easy to see that $\mathcal{P}^{\prime}(\ket{\psi}\bra{\psi})=\sum\limits_{i=1}^{r}\sum\limits_{j=1}^{r}\lambda_{i}\lambda_{j}\ket{u_{ij}}\bra{\bar{v}_{ij}}$ and $\mathcal{P}^{\prime}(\sigma)=\sum\limits_{i=1}^{r}\lambda_{i}^{2}\ket{u_{ii}}\bra{\bar{v}_{ii}}$, where $\ket{\bar{v}_{ij}}=\sum\limits_{\beta=1}^{N}\sum\limits_{l=1}^{M}\overline{\braket{f_{j}|E_{\beta,l}^{B}|f_i}}\ket{w_{\beta,l}}$ with $\overline{\braket{f_{j}|E_{\beta,l}^{B}|f_i}}$ standing for complex conjugation of $\braket{f_{j}|E_{\beta,l}|f_i}$. Therefore, we have 
    \begin{equation}
        \|\mathcal{P}(\rho)-\mathcal{P}(\sigma)\|_{\rm tr}=\|\mathcal{P^{\prime}}(\rho)-\mathcal{P}^{\prime}(\sigma)\|_{\rm tr}=\left\|\sum\limits_{i\neq j}\lambda_{i}\lambda_{j}\ket{u_{ij}}\bra{\bar{v}_{ij}}\right\|_{\rm tr}.
    \end{equation}
    We denote $F=\sum\limits_{i=1}^{d}\sum\limits_{j=1}^{d}\ket{e_{i}}\bra{e_{j}}\otimes\ket{e_{j}}\bra{e_{i}}$. Since $\sum\limits_{\alpha=1}^{N_{A}}\sum\limits_{k=1}^{M_{A}}E_{\alpha,k}^{A}\otimes E_{\alpha,k}^{A}=\dfrac{x_{A}M_{A}^{2}-d_{A}}{M_{A}(M_{A}-1)}F+\dfrac{d_{A}^{3}-x_{A}M_{A}^{2}}{d_{A}M_{A}(M_{A}-1)}I$\cite{SMdesign1,SMdesign2}, we have
    \begin{equation}\label{011}
    \braket{u_{ij}|u_{i^{\prime}j^{\prime}}}=\braket{e_{i}e_{j^{\prime}}|\sum\limits_{\alpha=1}^{N_{A}}\sum\limits_{k=1}^{M_{A}}E_{\alpha,k}^{A}\otimes E_{\alpha,k}^{A}|e_{j}e_{i^{\prime}}}=\dfrac{x_{A}M_{A}^{2}-d_{A}}{M_{A}(M_{A}-1)}\delta_{ii^{\prime}}\delta_{jj^{\prime}}
    +\dfrac{d_{A}^{3}-x_{A}M_{A}^{2}}{d_{A}M_{A}(M_{A}-1)}\delta_{ij}\delta_{i^{\prime}j^{\prime}}.
    \end{equation}
    Similarly, we have $\braket{v_{ij}|v_{i^{\prime}j^{\prime}}}=\dfrac{x_{B}M_{B}^{2}-d_{B}}{M_{B}(M_{B}-1)}
    \delta_{ii^{\prime}}\delta_{jj^{\prime}}+\dfrac{d_{B}^{3}-x_{B}M_{B}^{2}}{d_{B}M_{B}(M_{B}-1)}
    \delta_{ij}\delta_{i^{\prime}j^{\prime}}\in\mathbb{R}$. With help of this and $\braket{\overline{v}_{ij}|\overline{v}_{i^{\prime}j^{\prime}}}
    =\overline{\braket{v_{ij}|v_{i^{\prime}j^{\prime}}}}$, we obtain 
    \begin{equation}\label{012}
    \braket{\overline{v}_{ij}|\overline{v}_{i^{\prime}j^{\prime}}}=\dfrac{x_{B}M_{B}^{2}-d_{B}}{M_{B}(M_{B}-1)}\delta_{ii^{\prime}}\delta_{jj^{\prime}}
    +\dfrac{d_{B}^{3}-x_{B}M_{B}^{2}}{d_{B}M_{B}(M_{B}-1)}\delta_{ij}\delta_{i^{\prime}j^{\prime}}
    \end{equation}
    Therefore, both $\{\ket{u_{ij}}|i\neq j\}$ and $\{\ket{\overline{v}_{ij}}|i\neq j\}$ are orthogonal sets. Write $\mathcal{Q}=\sum\limits_{i\neq j}\lambda_{i}\lambda_{j}\ket{u_{ij}}\bra{\bar{v}_{ij}}$, then $\mathcal{Q}^{\dagger}\mathcal{Q}=\sum\limits_{i\neq j}(\lambda_{i}\lambda_{j})^{2}\braket{u_{ij}|u_{ij}}\ket{\bar{v}_{ij}}\bra{\bar{v}_{ij}}$. Hence, we can see that
    \begin{equation}
        \|\mathcal{P}(\rho)-\mathcal{P}(\sigma)\|_{\rm tr}=\|\mathcal{Q}\|_{\rm tr}=\sum\limits_{i\neq j}\lambda_{i}\lambda_{j}\sqrt{\braket{u_{ij}|u_{ij}}}\sqrt{\braket{\bar{v}_{ij}|\bar{v}_{ij}}}=\frac{2}{K}\sum\limits_{i<j}\lambda_{i}\lambda_{j}, 
    \end{equation}
    where the last equality is from (\ref{011}) and (\ref{012}). Consequently, we have
    \begin{equation}
        \|\mathcal{M}_{\mu, \nu}(\ket{\psi}\bra{\psi})\|_{\rm tr}\leq \|\mathcal{M}_{\mu, \nu}(\sigma)\|_{\rm tr}+\frac{2}{K}\sum\limits_{i<j}\lambda_{i}\lambda_{j}.
    \end{equation}
    According to Theorem 1 in \cite{YLu}, we obtain
    \begin{equation}\label{021}
        \|\mathcal{M}_{\mu, \nu}(\ket{\psi}\bra{\psi})\|_{\rm tr}\leq  \sqrt{|\mu|^{2}+\dfrac{(d_{A}-1)(M_{A}^{2}x_{A}+d_{A}^{2})}{d_{A}M_{A}(M_{A}-1)}}\sqrt{|\nu|^{2}+\dfrac{(d_{B}-1)(M_{B}^{2}x_{B}+d_{B}^{2})}{d_{B}M_{B}(M_{B}-1)}}
        +\frac{2}{K}\sum\limits_{i<j}\lambda_{i}\lambda_{j}.
    \end{equation}
    With the help of (\ref{021}) and $C(\ket{\psi})\geq 2\sqrt{\dfrac{2}{d(d-1)}}\sum\limits_{i<j}\lambda_{i}\lambda_{j}$\cite{LowBound0}, we can finish the proof. 
\end{proof}

Taking $\mu=0$ and $\nu=0$ in Theorem \ref{Result1}, we have
\begin{equation}
	C(\rho)\geq K\sqrt{\dfrac{2}{d(d-1)}}\left(\|\mathcal{P}(\rho)\|_{\rm tr}-
    \sqrt{\dfrac{(d_{A}-1)(x_{A}M_{A}^{2}+d_{A}^{2})}{d_{A}M_{A}(M_{A}-1)}}\sqrt{\dfrac{(d_{B}-1)(x_{B}M_{B}^{2}+d_{B}^{2})}{d_{B}M_{B}(M_{B}-1)}}\right).
\end{equation}
Hence, Theorem \ref{Result1} generalizes the lower bounds of concurrence based on symmetric measurements proposed in Refs.\cite{HFWang1,HFWang2}. To show the advantages of Theorem \ref{Result1}, we give the following example.

\begin{example}
    Consider the mixture of the bound entangled state proposed by Horodecki\cite{example1},
	\begin{equation*}
		\rho_{\tau}=\dfrac{1}{1+8\tau}
		\begin{pmatrix}
			\tau & 0 & 0 & 0 & \tau & 0 & 0 & 0 & \tau\\
			0 & \tau & 0 & 0 & 0 & 0 & 0 & 0 & 0\\
			0 & 0 & \tau & 0 & 0 & 0 & 0 & 0 & 0\\
			0 & 0 & 0 & \tau & 0 & 0 & 0 & 0 & 0\\
			\tau & 0 & 0 & 0 & \tau & 0 & 0 & 0 & \tau\\
			0 & 0 & 0 & 0 & 0 & \tau & 0 & 0 & 0\\
			0 & 0 & 0 & 0 & 0 & 0 & \frac{1+\tau}{2} & 0 & \frac{\sqrt{1-\tau^{2}}}{2}\\
			0 & 0 & 0 & 0 & 0 & 0 & 0 & 0 & 0\\
			\tau & 0 & 0 & 0 & \tau & 0 & \frac{\sqrt{1-\tau^{2}}}{2} & 0 & \frac{1+\tau}{2}
		\end{pmatrix}
	\end{equation*}
    and the $9\times 9$ identity matrix $I_{9}$,
    \begin{equation*}
    	\rho(\tau,q)=q\rho_{\tau}+\frac{1-q}{9}I_{9}.
    \end{equation*}
    We construct a $(8,2)$-POVM with the Hermitian basis operators $G_{\alpha,k}$ given in Appendix \ref{Appendix A}. It is verified that the parameter $x=\frac{3}{4}+t^{2}(\sqrt{2}+1)^{2}$ with $t\in[-0.2536,0.2536]$. Fig.\ref{fig1} illustrate the lower bounds of $C(\rho(\tau,0.995))$. The purple curve is the bound from Theorem \ref{Result1} for $\mu=(0.00001,0.00002)^{\mathsf{T}},\nu=(0.00005,0.00007)^{\mathsf{T}}$, and the $(8,2)$-POVM with $t=0.01$. The red curve is the bound from Theorem in Ref.\cite{HFWang1} based on the $(8,2)$-POVM with $t=0.01$. Clearly, the purple curve is a tighter lower bound and detects more entangled states.
    \begin{figure}[h]
    	\includegraphics[width=10cm]{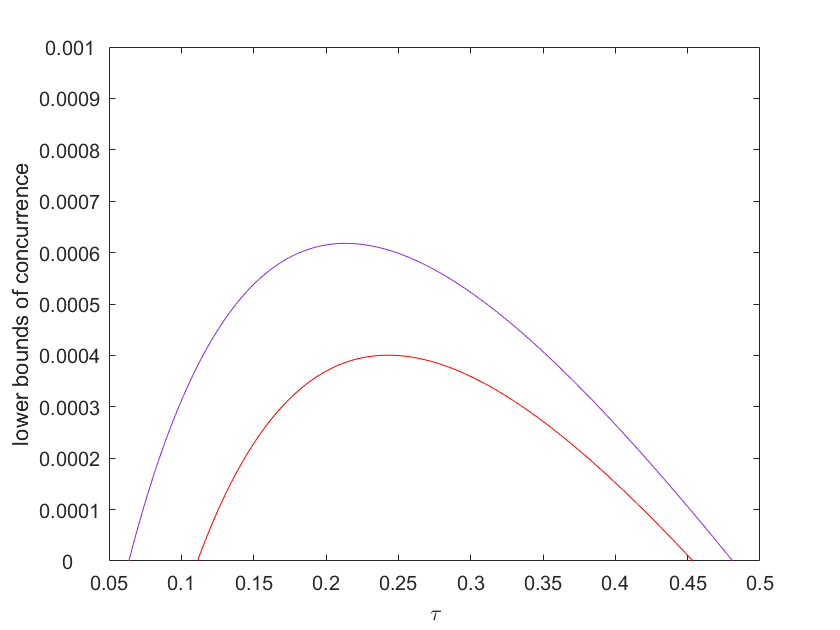}
    	\caption{Lower bounds of $C(\rho(\tau,0.995))$. the purple curve for the bound from Theorem \ref{Result1}. The red curve for the bound from Theorem in Ref.\cite{HFWang1}.}\label{fig1}
    \end{figure}
\end{example}

\section{Results from symmetric measurements is always stronger than the one from realignment}
In Ref.\cite{HFWang1} a lower bound of concurrence has been derived,
\begin{equation}\label{thmHFRang1}
C(\rho)\geq \dfrac{M(M-1)}{xM^{2}-d}\sqrt{\dfrac{2}{d(d-1)}}\left(\|\mathcal{P}(\rho)\|_{\rm tr}-\dfrac{(d-1)(xM^{2}+d^{2})}{dM(M-1)}\right),
\end{equation}
which is a special case of Theorem \ref{Result1}.
The lower bound of concurrence from realignment says that \cite{LowBound0}
\begin{equation}\label{conRE}
C(\rho)\geq \sqrt{\dfrac{2}{d(d-1)}}(\|\mathcal{R}(\rho)\|_{\rm tr}-1),
\end{equation}
where $\mathcal{R}$ denotes the realignment. In this section, our goal is to prove the conjecture presented in Ref.\cite{HFWang1} : although lower bound (\ref{thmHFRang1}) is intended for unknown states, it is always tighter than lower bound (\ref{conRE}) for known states. In fact, this conjecture is stronger than Conjecture 1 in Ref.\cite{JWShang}. \par
Firstly, we need the following two lemmas. 
\begin{lemma}\cite{Jivulescu}\label{lemma1}
    Let $\{\ket{a_1},\cdots,\ket{a_n}\}$, $\{\ket{b_1},\cdots,\ket{b_n}\}$ be two orthonormal bases of $\mathbb{C}^{n}$. For complex numbers $\gamma_{1},\cdots,\gamma_{n}$ such that $|\gamma_{i}|\geq 1$ for all $1\leq i\leq n$, define the matrix 
    \begin{equation}\label{key}
        \mathcal{E}:=\sum\limits_{i=1}^{n}\gamma_{i}\ket{a_{i}}\bra{b_{i}}
    \end{equation}
    Then, for any $X\in M_{n}(\mathbb{C})$, we have
    \begin{equation}
        \|\mathcal{E}X\mathcal{E}^{\dagger}\|_{\rm tr}\geq \|X\|_{\rm tr}+\sum\limits_{i=1}^{n}(|\gamma_{i}|^{2}-1)\braket{b_{i}|X|b_{i}}
    \end{equation}
\end{lemma}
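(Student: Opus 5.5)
The plan is to strip the unitaries out of $\mathcal{E}$, reduce to a real diagonal scaling $D\geq I$, and then rescale one index at a time. Throughout, $\braket{b_i|X|b_i}$ is read as its real part; for Hermitian $X$, which I expect is the case the paper needs, these numbers are already real.

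\emph{Reduction.} Set $A=\sum_i\ket{a_i}\bra{i}$ and $B=\sum_i\ket{b_i}\bra{i}$; both are unitary. Write $\gamma_i=d_ie^{i\theta_i}$ with $d_i=|\gamma_i|\geq 1$, and put $D={\rm diag}(d_i)$ and $\Phi={\rm diag}(e^{i\theta_i})$. Then
\[
\mathcal{E}X\mathcal{E}^{\dagger}=A\,D\,Y\,D\,A^{\dagger},\qquad Y:=\Phi B^{\dagger}XB\Phi^{\dagger}.
\]
By unitary invariance of the trace norm, $\|\mathcal{E}X\mathcal{E}^{\dagger}\|_{\rm tr}=\|DYD\|_{\rm tr}$ and $\|Y\|_{\rm tr}=\|X\|_{\rm tr}$. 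Moreover $Y_{ii}=\braket{b_i|X|b_i}$, since conjugating by the diagonal phase matrix $\Phi$ leaves the diagonal unchanged. So it suffices to prove
\[
\|DYD\|_{\rm tr}\geq\|Y\|_{\rm tr}+\sum_i(d_i^2-1)Y_{ii}.
\]
For $Y\geq 0$ this holds with equality, because then $\|DYD\|_{\rm tr}={\rm tr}(D^2Y)$. The work is in the general case.

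\emph{Induction on indices.} Write $D=D_1D_2\cdots D_n$ with $D_j=I+(d_j-1)P_j$ and $P_j=\ket{j}\bra{j}$. Conjugating by $D_j$ multiplies the $(j,j)$ entry by $d_j^2$ and leaves every other diagonal entry unchanged. So if I prove the one-index inequality
\[
\|D_jYD_j\|_{\rm tr}\geq\|Y\|_{\rm tr}+(c^2-1)Y_{jj}\qquad(c=d_j\geq1)
\]
for arbitrary $Y$, applying it successively for $j=1,\dots,n$ gives the full inequality by telescoping.

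\emph{One-index step.} This is the main obstacle. Let $S=I-2P$ with $P=P_j$, put $t=1/c\in(0,1]$, and let $Z=D_jYD_j$. Comparing the block coefficients (the $(j,j)$ entry, the $j$-th row and column, and the rest) gives the decomposition
\[
Y=\tfrac{1+t}{2}Z+\tfrac{1-t}{2}SZS+(t^2-1)PZP .
\]
Using $PZP=c^2Y_{jj}P$, this becomes
\[
\tfrac{1+t}{2}Z+\tfrac{1-t}{2}SZS=Y+(c^2-1)Y_{jj}P .
\]
Since $S$ is unitary and the two coefficients are nonnegative and sum to one, the triangle inequality gives $\|Z\|_{\rm tr}\geq\|Y+sP\|_{\rm tr}$ with $s=(c^2-1)Y_{jj}$.

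It then remains to show $\|Y+sP\|_{\rm tr}\geq\|Y\|_{\rm tr}+sY_{jj}$. I would first try the variational bound $\|Y+sP\|_{\rm tr}\geq{\rm Re}\,{\rm tr}(W^{\dagger}(Y+sP))$, where $W$ is the polar unitary of $Y$. This yields $\|Y\|_{\rm tr}+s\,{\rm Re}\,\overline{W_{jj}}$, so the final step needs $s\,{\rm Re}\,\overline{W_{jj}}\geq sY_{jj}$. This is exactly where I expect difficulty: the bound does not follow for arbitrary $Y$. It is immediate when $Y\geq0$, since then $W=I$ and $|Y_{jj}|\leq1$ after normalizing $\|Y\|_{\rm tr}=1$. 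If it fails in general, I would instead optimize over unitaries of the form $W$ modified on the $j$-th row and column, or restrict the lemma to the positive or Hermitian inputs that actually occur in its later application, where $W$ commutes appropriately with $Y$.
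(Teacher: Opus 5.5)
The paper does not prove this lemma; it imports it from the cited reference, so your argument has to stand on its own. Several of your steps are correct:
\begin{itemize}
\item the reduction to $\|DYD\|_{\rm tr}\geq\|Y\|_{\rm tr}+\sum_i(d_i^2-1){\rm Re}\,Y_{ii}$;
\item the positive case;
\item the telescoping over indices;
\item the block identity $Y=\tfrac{1+t}{2}Z+\tfrac{1-t}{2}SZS+(t^2-1)PZP$.
\end{itemize}
The gap is the one-index step, and it is a dead end, not just a technical difficulty.

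First, a bookkeeping slip. Since $s=(c^2-1)Y_{jj}$ already contains $Y_{jj}$, what you need after $\|Z\|_{\rm tr}\geq\|Y+sP\|_{\rm tr}$ is $\|Y+sP\|_{\rm tr}\geq\|Y\|_{\rm tr}+s$, not $+sY_{jj}$. Normalizing $\|Y\|_{\rm tr}=1$ is also not allowed in your version, because it is not homogeneous.

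The correct target is false in general. For $s>0$ the triangle inequality gives the reverse inequality $\|Y+sP\|_{\rm tr}\leq\|Y\|_{\rm tr}+s$. Equality requires a polar unitary of $Y$ fixing $\ket{j}$, which is exactly your condition ${\rm Re}\,W_{jj}\geq1$. Concretely, take $j=1$, $c=2$ and
\[
Y=\begin{pmatrix}0.1&1\\1&0\end{pmatrix},\qquad s=0.3.
\]
Then $\|Y+sP\|_{\rm tr}=\sqrt{4.16}\approx2.04<\sqrt{4.01}+0.3\approx2.30$, while $\|Z\|_{\rm tr}=\sqrt{16.16}\approx4.02$. So the one-index inequality holds, but your chain cannot reach it. Averaging $Z$ and $SZS$ discards exactly the gain from the off-diagonal entries being multiplied by $c$. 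No modified test unitary can recover it, because every variational bound is at most $\|Y+sP\|_{\rm tr}$.

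Restricting to Hermitian $X$ is not a way out either. In the proof of Theorem \ref{Result2} the lemma is applied to $\mathcal{R}(\rho)F$, which is not Hermitian in general. If it were, the real matrix $\mathcal{P}(\rho)$ would be symmetric, and that fails whenever $F\rho F\neq\rho$.

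The missing idea is trace-norm duality with a test contraction built to keep that gain. Put $T=D^{-1}$, so $0<T\leq I$, and $Z=DYD$. For any contraction $W$, define
\[
C:=TWT+(I-T^{2})=V^{\dagger}(W\oplus I)V,\qquad V=\begin{pmatrix}T\\(I-T^{2})^{1/2}\end{pmatrix}.
\]
Since $V$ is an isometry, $\|C\|_\infty\leq1$, and therefore
\[
\|Z\|_{\rm tr}\geq{\rm Re}\,{\rm tr}(C^{\dagger}Z)={\rm Re}\,{\rm tr}(W^{\dagger}Y)+{\rm Re}\,{\rm tr}\big((D^{2}-I)Y\big).
\]
Choosing $W$ as the polar unitary of $Y$ gives $\|DYD\|_{\rm tr}\geq\|Y\|_{\rm tr}+\sum_i(d_i^2-1){\rm Re}\,Y_{ii}$ for arbitrary $Y$ at once. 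Combined with your unitary reduction, this proves the lemma, and the index-by-index induction becomes unnecessary.

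For Hermitian $X$ alone there is also a two-line argument. Write $Z=Z_+-Z_-$, bound $\|Y\|_{\rm tr}\leq{\rm tr}(T^2Z_+)+{\rm tr}(T^2Z_-)$, and use ${\rm tr}(T^2Z_-)\leq{\rm tr}(Z_-)$.
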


\begin{lemma}\label{lemma2}
    Suppose $\ket{y_1},\cdots,\ket{y_n}$ are vectors in $\mathbb{C}^{d}$ and $G$ is their Gram matrix. Then the operator $P:=\sum\limits_{i=1}^{n}\ket{y_{i}}\bra{y_{i}}$ is a projection if and only if $G^{2}=G$. 
\end{lemma}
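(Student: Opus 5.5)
The plan is to work directly with the $n\times n$ Gram matrix $G_{ij}=\braket{y_i|y_j}$. Write it as $G=Y^{\dagger}Y$, where $Y=\sum_{i=1}^{n}\ket{y_i}\bra{i}$ is the $d\times n$ matrix whose columns are the $\ket{y_i}$ and $\{\ket{i}\}$ is the standard basis of $\mathbb{C}^{n}$. Then $P=YY^{\dagger}$ and $G=Y^{\dagger}Y$. The whole lemma reduces to comparing $YY^{\dagger}$ and $Y^{\dagger}Y$, which have the same nonzero eigenvalues with multiplicity, for instance via the singular value decomposition of $Y$.

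\emph{Forward direction.} Assume $P^{2}=P$. Then
\begin{equation*}
G^{3}=Y^{\dagger}(YY^{\dagger})(YY^{\dagger})Y=Y^{\dagger}P^{2}Y=Y^{\dagger}PY=G^{2}.
\end{equation*}
Since $G$ is Hermitian and positive semidefinite, its eigenvalues $g$ satisfy $g^{3}=g^{2}$, so $g\in\{0,1\}$, and hence $G^{2}=G$. Alternatively, one can write $Y=U\Sigma V^{\dagger}$. Then $P=U\Sigma\Sigma^{\mathsf{T}}U^{\dagger}$ and $G=V\Sigma^{\mathsf{T}}\Sigma V^{\dagger}$. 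Both are idempotent exactly when every nonzero singular value of $Y$ equals $1$.

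\emph{Converse direction.} Assume $G^{2}=G$. Symmetrically,
\begin{equation*}
P^{3}=Y(Y^{\dagger}Y)(Y^{\dagger}Y)Y^{\dagger}=YG^{2}Y^{\dagger}=YGY^{\dagger}=P^{2}.
\end{equation*}
Since $P$ is Hermitian and positive semidefinite, the same eigenvalue argument gives $P^{2}=P$. Because $P$ is already Hermitian, idempotence means it is an orthogonal projection.

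There is no real obstacle here. The only point needing care is going from $A^{3}=A^{2}$ to $A^{2}=A$, which requires $A$ to be diagonalizable. This holds because both $P$ and $G$ are Hermitian positive semidefinite, so I would state that explicitly. The SVD route avoids the issue entirely and handles both directions at once, so I would present that as the main argument. I would also note that no linear independence or $n\le d$ assumption is needed.
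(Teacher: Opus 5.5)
Your proof is correct, but it takes a different route from the paper's.

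The paper works directly with vectors. Assuming $G^{2}=G$, it sets $\ket{u}=P^{2}\ket{v}-P\ket{v}$, notes that $\ket{u}\in\mathrm{span}\{\ket{y_k}\}$, and computes $\braket{y_k|u}=\sum_{j}(G^{2}-G)_{kj}\braket{y_j|v}=0$ for all $k$, so $\ket{u}=0$. Conversely, from $P^{2}=P$ and ${\rm ran}(P)=\mathrm{span}\{\ket{y_k}\}$ it gets $P\ket{y_k}=\ket{y_k}$, i.e.\ $\ket{y_k}=\sum_{i}G_{ik}\ket{y_i}$; pairing with $\bra{y_j}$ then yields $G=G^{2}$ entrywise. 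That route needs only inner-product bookkeeping and no spectral theory, but the two directions rest on unrelated arguments.

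Your factorization $P=YY^{\dagger}$, $G=Y^{\dagger}Y$ does the work differently. With the intertwining identities $G^{3}=Y^{\dagger}P^{2}Y$ and $P^{3}=YG^{2}Y^{\dagger}$ (or the SVD), it treats both directions symmetrically. It also isolates the real content: each condition says that $Y$ is a partial isometry, i.e.\ every nonzero singular value of $Y$ equals $1$. The step you flag, $A^{3}=A^{2}\Rightarrow A^{2}=A$, is justified because $P$ and $G$ are Hermitian and hence unitarily diagonalizable.

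Your viewpoint also matches how the lemma is used in Theorem 2. There $U=\sum_{\alpha,k}\ket{w_{\alpha,k}}\bra{y_{\alpha,k}}$ satisfies $U^{\dagger}U=P$, while $UU^{\dagger}$ is exactly $G$ written in the basis $\{\ket{w_{\alpha,k}}\}$. So the lemma says precisely that $UU^{\dagger}$ is a projection if and only if $U^{\dagger}U$ is, i.e.\ that $U$ is a partial isometry. Your remark that no linear independence is needed also matters there: for $N>1$, the $NM=d^{2}-1+N$ vectors $\ket{y_{\alpha,k}}$ in $\mathbb{C}^{d^{2}}$ are necessarily linearly dependent.
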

\begin{proof}
    Suppose $G^{2}=G$. For any $\ket{v}\in \mathbb{C}^{d}$, we have $P^{2}\ket{v}=\sum\limits_{i=1}^{n}\sum\limits_{j=1}^{n}
    \braket{y_{i}|y_{j}}\braket{y_{j}|v}\ket{y_{i}}
    =\sum\limits_{i=1}^{n}\sum\limits_{j=1}^{n}G_{ij}\braket{y_{j}|v}\ket{y_{i}}$ and $P\ket{v}=\sum\limits_{j=1}^{n}\braket{y_{j}|v}\ket{y_{j}}$. Write $\ket{u}=P^{2}\ket{v}-P\ket{v}$, then $\ket{u}\in {\rm span}\{\ket{y_k}|i=1,2,\cdots,n\}$ and 
    \begin{flalign*}
    \braket{y_{k}|u}
    & =\sum\limits_{i=1}^{n}\sum\limits_{j=1}^{n}G_{ki}G_{ij}\braket{y_{j}|v}-\sum\limits_{j=1}^{n}G_{kj}\braket{y_{j}|v}=\sum\limits_{j=1}^{n}\left(\sum\limits_{i=1}^{n}G_{ki}G_{ij}\right)\braket{y_{j}|v}-\sum\limits_{j=1}^{n}G_{kj}\braket{y_{j}|v}\\
    & =\sum\limits_{j=1}^{n}(G^{2})_{kj}\braket{y_{j}|v}-\sum\limits_{j=1}^{n}G_{kj}\braket{y_{j}|v}=0
    \end{flalign*}
    for all $k\in\{1,2,\cdots,n\}$. And thus, $\ket{u}=0$, which means that $P^{2}\ket{v}=P\ket{v}$ holds for all $\ket{v}\in \mathbb{C}^{n}$. In other words, $P^{2}=P$. Obviously, $P$ is also Hermitian, so $P$ is a projection.\par
    Next, suppose $P$ is a projection. Then we have $P\ket{y_{k}}=\ket{y_{k}}$ ($j=1,\cdots,n$) because ${\rm ran}(P)={\rm span}\{\ket{y_{1}},\cdots,\ket{y_{n}}\}$ and $P^{2}=P$. Together with $P\ket{y_{k}}=\sum\limits_{i=1}^{n}G_{ik}\ket{y_{i}}$, we obtain $\ket{y_{k}}=\sum\limits_{i=1}^{n}G_{ik}\ket{y_{i}}$. Thus, $G_{jk}=\braket{y_{j}|y_{k}}=\sum\limits_{i=1}^{n}G_{ik}\braket{y_{j}|y_{i}}=\sum\limits_{i=1}^{n}G_{ji}G_{ik}=(G^{2})_{jk}$, which shows that $G^{2}=G$.
\end{proof}

Now, we can prove the central result of this section. 

\begin{theorem}\label{Result2}
    Let $\{E_{\alpha,k}|\alpha=1,\cdots,N;\,k=1,\cdots,M\}$ be a informationally complete $(N, M)$-POVM with the free parameter $x$ on the $d$ dimensional Hilbert space $\mathcal{H}$, $\rho$ be a bipartite state in $\mathcal{H}\otimes\mathcal{H}$, $\{\ket{w_{\alpha,k}}|\alpha=1,\cdots,N;\,k=1,\cdots,M\}$ be an orthonormal basis of $\mathbb{C}^{NM}$. Define $\mathcal{P}(\rho)=\sum\limits_{\alpha,\beta=1}^{N}\sum\limits_{k,l=1}^{M}{\rm tr}\left(\rho\left(E_{\alpha,k}\otimes E_{\beta,l}\right)\right)\ket{w_{\alpha,k}}\bra{w_{\beta,l}}$, and let $\mathcal{R}(\rho)$ be the realigned matrix of $\rho$, then 
    \begin{equation}
        \dfrac{M(M-1)}{xM^{2}-d}\left(\|\mathcal{P}(\rho)\|_{\rm tr}-\dfrac{(d-1)(xM^{2}+d^{2})}{dM(M-1)}\right)\geq \|\mathcal{R}(\rho)\|_{\rm tr}-1\mbox{.}
    \end{equation}
\end{theorem}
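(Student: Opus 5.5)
The plan is to write both $\mathcal{P}(\rho)$ and $\mathcal{R}(\rho)$ in terms of one coefficient matrix of $\rho$, and then compare their trace norms using Lemma \ref{lemma1}.

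\textbf{Step 1: a common coefficient matrix.} Fix the orthonormal Hermitian basis $\{G_{0}=I_{d}/\sqrt{d},\,G_{\alpha,k}\}$ of the preliminaries and relabel it as $\{G_{i}\}_{i=0}^{d^{2}-1}$. Expand
\begin{equation*}
\rho=\sum_{i,j}c_{ij}\,G_{i}\otimes G_{j},\qquad c_{ij}={\rm tr}(\rho\, G_{i}\otimes G_{j})\in\mathbb{R},
\end{equation*}
and let $C=(c_{ij})$. The realignment maps $G_{i}\otimes G_{j}$ to the rank-one matrix built from the vectorizations of $G_{i}$ and $G_{j}$. These vectorizations form orthonormal bases, so $\mathcal{R}(\rho)=U C W^{\mathsf{T}}$ with $U,W$ unitary, and hence $\|\mathcal{R}(\rho)\|_{\rm tr}=\|C\|_{\rm tr}$.

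Next define the real $NM\times d^{2}$ matrix $T$ by $T_{(\alpha,k),i}={\rm tr}(E_{\alpha,k}G_{i})$. Since ${\rm tr}(\rho\,E_{\alpha,k}\otimes E_{\beta,l})=\sum_{i,j}c_{ij}T_{(\alpha,k),i}T_{(\beta,l),j}$, we get $\mathcal{P}(\rho)=TCT^{\mathsf{T}}$ in the basis $\{\ket{w_{\alpha,k}}\}$.

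\textbf{Step 2: the Gram matrix of $T$.} Use the conical 2-design identity quoted in the proof of Theorem \ref{Result1}:
\begin{equation*}
\sum_{\alpha,k}E_{\alpha,k}\otimes E_{\alpha,k}=aF+bI,\qquad a=\frac{xM^{2}-d}{M(M-1)},\quad b=\frac{d^{3}-xM^{2}}{dM(M-1)}.
\end{equation*}
Then
\begin{equation*}
(T^{\mathsf{T}}T)_{ij}={\rm tr}\big((aF+bI)(G_{i}\otimes G_{j})\big)=a\,\delta_{ij}+b\,{\rm tr}G_{i}\,{\rm tr}G_{j}=a\,\delta_{ij}+bd\,\delta_{i0}\delta_{j0}.
\end{equation*}
So $T^{\mathsf{T}}T=aD^{2}$, where $D={\rm diag}(\gamma_{0},1,\dots,1)$ and $\gamma_{0}^{2}=1+bd/a$. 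We need $\gamma_{0}\geq1$, i.e. $b\geq 0$, which follows from $x\leq d^{2}/M^{2}$ (strictly, $x\le d^2/M^2$ gives $xM^2\le d^2\le d^3$).

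Consequently $T=\sqrt{a}\,VD$ for a real isometry $V:\mathbb{R}^{d^{2}}\to\mathbb{R}^{NM}$; here $NM\geq d^{2}$ because $N(M-1)=d^{2}-1$. Since $V$ is an isometry, $\|\mathcal{P}(\rho)\|_{\rm tr}=a\|DCD\|_{\rm tr}$, and no padding is needed.

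\textbf{Step 3: apply Lemma \ref{lemma1}.} Take $\mathcal{E}=D$, with $\ket{a_{i}}=\ket{b_{i}}$ the standard basis, $X=C$, and $\gamma$'s equal to $\gamma_{0},1,\dots,1$, all of modulus at least $1$. Since $D$ is real diagonal, $DCD^{\dagger}=DCD$. The lemma gives
\begin{equation*}
\|DCD\|_{\rm tr}\geq\|C\|_{\rm tr}+(\gamma_{0}^{2}-1)c_{00}.
\end{equation*}
Here $c_{00}={\rm tr}(\rho)/d=1/d$. Therefore
\begin{equation*}
\|\mathcal{P}(\rho)\|_{\rm tr}\geq a\|\mathcal{R}(\rho)\|_{\rm tr}+a(\gamma_{0}^{2}-1)/d=a\|\mathcal{R}(\rho)\|_{\rm tr}+b.
\end{equation*}

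\textbf{Step 4: the constants.} Put $c=\frac{(d-1)(xM^{2}+d^{2})}{dM(M-1)}$. The claim is $(\|\mathcal{P}(\rho)\|_{\rm tr}-c)/a\geq\|\mathcal{R}(\rho)\|_{\rm tr}-1$. By Step 3 it suffices to show $c-b\leq a$. A direct computation gives
\begin{equation*}
c-b=\frac{(d-1)xM^{2}+d^{3}-d^{2}-d^{3}+xM^{2}}{dM(M-1)}=\frac{xM^{2}-d}{M(M-1)}=a,
\end{equation*}
so equality holds and the theorem follows.

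\textbf{Main obstacle.} The delicate part is Step 2: identifying $T^{\mathsf{T}}T$ exactly, so that $T$ factors as a scaled isometry times a diagonal matrix with a single enlarged entry. This factorization is what lets Lemma \ref{lemma1} apply with only the $c_{00}=1/d$ correction term. One must also check the sign condition $\gamma_{0}\ge 1$ and confirm that the transpose and conjugation conventions make $\mathcal{P}(\rho)=TCT^{\mathsf{T}}$ with real $T$.
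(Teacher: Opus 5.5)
Your proof is correct. It rests on the same key tool as the paper: Lemma \ref{lemma1} with a single enlarged coefficient $\gamma_0^2=d(d^2-1)/(xM^2-d)$, which is the paper's $\gamma_1$. The decomposition, however, is genuinely different and shorter.

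\textbf{The paper's route.} It works with the vectorized POVM elements $\ket{v_{\alpha,k}}$ and the realigned matrix itself, writing $\mathcal{P}(\rho)=\hat{\mathcal{E}}(\mathcal{R}(\rho)F)\hat{\mathcal{E}}^{\dagger}$ with the swap $F$. Since $\hat{\mathcal{E}}$ is a generally non-square $NM\times d^{2}$ matrix, it must then be put into the form required by Lemma \ref{lemma1}. The paper does this by introducing the shifted vectors $\ket{y_{\alpha,k}}$ and computing their Gram matrix from the defining trace relations and $N(M-1)=d^2-1$. It then proves idempotence via Lemma \ref{lemma2}, splits off a partial isometry $V$ with $\mathcal{E}=\gamma_1\ket{v}\bra{\psi}+V$, and finishes with an external lemma (Lemma 2.30 of Carlen).

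\textbf{Your route.} You pass to the correlation matrix $C$ in a product Hermitian basis with $G_0=I_d/\sqrt{d}$. There the conical 2-design identity makes $T^{\mathsf{T}}T=aD^{2}$ diagonal. Then $T=\sqrt{a}\,VD$ with $V$ an isometry that drops out of the trace norm, and Lemma \ref{lemma1} is applied to a square diagonal matrix. Your $aD^2$ is just $\hat{\mathcal{E}}^{\dagger}\hat{\mathcal{E}}=aI+bd\ket{\psi}\bra{\psi}$ written in the basis $\{{\rm vec}(G_i)\}$, with ${\rm vec}(G_0)=\ket{\psi}$.

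\textbf{What each buys.} The paper uses only the pairwise trace relations defining the $(N,M)$-POVM, in effect re-deriving the needed frame property by hand. You instead import the identity $\sum_{\alpha,k}E_{\alpha,k}\otimes E_{\alpha,k}=aF+bI$ quoted in the proof of Theorem \ref{Result1}. In exchange you avoid Lemma \ref{lemma2}, the swap, and the appeal to Carlen. You also make explicit that the entire gain over realignment is the single term $a(\gamma_0^2-1)c_{00}=b$, and that $c=a+b$, so the constants match exactly.

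This last point generalizes. Every conical 2-design satisfies $\sum_i E_i\otimes E_i=aF+bI$ with $a>0$ and $b\ge 0$, and the product-state constant in its induced bound is again $a+b$. So your few lines compare any conical 2-design bound with the realignment bound. That is relevant to the paper's closing conjecture, though it does not settle it.

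One small thing to state explicitly: $a>0$, which follows from $x>d/M^2$. You use it when taking $\sqrt{a}$ and dividing by $a$.
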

\begin{proof}
    Denote $\ket{v_{\alpha,k}}=\sum\limits_{i=1}^{d}\sum\limits_{j=1}^{d}\braket{i|E_{\alpha,k}|j}\ket{ij}$, $\hat{\mathcal{E}}=\sum\limits_{\alpha=1}^{N}\sum\limits_{k=1}^{M}\ket{w_{\alpha,k}}\bra{v_{\alpha,k}}$, $X=\mathcal{R}(\rho)=\sum\limits_{p,q=1}^{d}\sum\limits_{a,b=1}^{d}\braket{pa|\rho|qb}\ket{pq}\bra{ab}$. Let $F$ be the swap operator on $\mathcal{H}\otimes\mathcal{H}$. Then 
     \begin{flalign}
        \hat{\mathcal{E}}(XF)\hat{\mathcal{E}}^{\dagger} & =\sum\limits_{\alpha,\beta=1}^{N}\sum\limits_{k,l=1}^{M}\braket{v_{\alpha,k}|XF|v_{\beta,l}}\ket{w_{\alpha,k}}\bra{w_{\beta,l}}\\
        & =\sum\limits_{\alpha,\beta=1}^{N}\sum\limits_{k,l=1}^{M}\sum\limits_{i,j,s,t=1}^{d}\braket{j|E_{\alpha,k}|i}\braket{s|E_{\beta,l}|t}\braket{ij|X|ts}\ket{w_{\alpha,k}}\bra{w_{\beta,l}}\\
        & =\sum\limits_{\alpha,\beta=1}^{N}\sum\limits_{k,l=1}^{M}\sum\limits_{i,j,s,t=1}^{d}\braket{js|E_{\alpha,k}\otimes E_{\beta,l}|it}\braket{it|\rho|js}\ket{w_{\alpha,k}}\bra{w_{\beta,l}}\\
        & =\sum\limits_{\alpha,\beta=1}^{N}\sum\limits_{k,l=1}^{M}\sum\limits_{j,s=1}^{d}\braket{js|(E_{\alpha,k}\otimes E_{\beta,l})\rho|js}\ket{w_{\alpha,k}}\bra{w_{\beta,l}}\\
        & =\sum\limits_{\alpha,\beta=1}^{N}\sum\limits_{k,l=1}^{M}{\rm tr}\left(\left(E_{\alpha,k}\otimes E_{\beta,l}\right)\rho\right)\ket{w_{\alpha,k}}\bra{w_{\beta,l}}=\mathcal{P}(\rho)
    \end{flalign}
    Therefore $\|\hat{\mathcal{E}}(XF)\hat{\mathcal{E}^{\dagger}}\|_{\rm tr}=\|\mathcal{P}(\rho)\|_{\rm tr}$. And $\|XF\|_{\rm tr}=\|X\|_{\rm tr}=\|\mathcal{R}(\rho)\|_{\rm tr}$, because $F$ is a unitary operator. \par
    Notice that $1={\rm Tr}(\rho)=d\braket{\psi|X|\psi}$, where $\ket{\psi}=\frac{1}{\sqrt{d}}\sum\limits_{i=1}^{d}\ket{ii}$ is the maximally entangled state on $\mathcal{H}\otimes\mathcal{H}$, the inequality in the statement reads 
    \begin{equation}
        \dfrac{M(M-1)}{xM^{2}-d}\|\hat{\mathcal{E}}(XF)\hat{\mathcal{E}}^{\dagger}\|_{\rm tr}\geq \|X\|_{\rm tr}+\dfrac{d^{3}-xM^{2}}{xM^{2}-d}\braket{\psi|X|\psi}=\|XF\|_{\rm tr}+\dfrac{d^{3}-xM^{2}}{xM^{2}-d}\braket{\psi|XF|\psi}\mbox{.}
    \end{equation}
     In order to conclude, we need to show that $\mathcal{E}:=\sqrt{\dfrac{M(M-1)}{xM^{2}-d}}\hat{\mathcal{E}}$ can be written as in equation (\ref{key}) from Lemma \ref{lemma1}, with $b_{1}=\psi,\gamma_{1}=\sqrt{\frac{d^{3}-xM^{2}}{xM^{2}-d}+1}=\sqrt{\frac{d(d^{2}-1)}{xM^{2}-d}},\gamma_{2}=\cdots=\gamma_{d^{2}}=1$.\par
     For all $\alpha\in\{1,2,\cdots,N\}$ and $k\in\{1,2,\cdots,M\}$, we define 
     \begin{equation}
        \ket{y_{\alpha,k}}:=\sqrt{\frac{M(M-1)}{xM^{2}-d}}\ket{v_{\alpha,k}}-\frac{1}{\sqrt{NM}}\left(\sqrt{\frac{d(d^{2}-1)}{xM^{2}-d}}-1\right)\ket{\psi}\mbox{.}
    \end{equation}
    And then, we have 
    \begin{equation}
        \mathcal{E}=\sqrt{\dfrac{M(M-1)}{xM^{2}-d}}\sum\limits_{\alpha=1}^{N}\sum\limits_{k=1}^{M}\ket{w_{\alpha,k}}\bra{v_{\alpha,k}}=\left(\sqrt{\frac{d(d^{2}-1)}{xM^{2}-d}}-1\right)\ket{v}\bra{\psi}+\sum\limits_{\alpha=1}^{N}\sum\limits_{k=1}^{M}\ket{w_{\alpha,k}}\bra{y_{\alpha,k}},
    \end{equation}
    where $\ket{v}=\dfrac{1}{\sqrt{NM}}\sum\limits_{\alpha=1}^{N}\sum\limits_{k=1}^{M}\ket{w_{\alpha,k}}$. Write $U:=\sum\limits_{\alpha=1}^{N}\sum\limits_{k=1}^{M}\ket{w_{\alpha,k}}\bra{y_{\alpha,k}}$, then $\mathcal{E}=\left(\sqrt{\frac{d(d^{2}-1)}{xM^{2}-d}}-1\right)\ket{v}\bra{\psi}+U$ and we can obtain $U\ket{\psi}=\ket{v}$ by direct computation. Clearly, $\braket{v_{\alpha,k}|v_{\beta,l}}={\rm tr}(E_{\alpha,k}E_{\beta,l})$ and $\braket{v_{\alpha,k}|\psi}=\frac{1}{\sqrt{d}}{\rm tr}(E_{\alpha,k})=\frac{\sqrt{d}}{M}$. Consequently, we can verify that $\braket{y_{\alpha,k}|y_{\beta,l}}=\frac{1}{NM}-\frac{1}{M}\delta_{\alpha\beta}+\delta_{\alpha\beta}\delta_{kl}$ by taking into account $N(M-1)=d^{2}-1$. From $U^{\dagger}\ket{v}=\dfrac{1}{\sqrt{NM}}\sum\limits_{\alpha=1}^{N}\sum\limits_{k=1}^{M}\ket{y_{\alpha,k}}$, we obtain
    \begin{flalign}
        \|\ket{\psi}-U^{\dagger}\ket{v}\|^{2} & = \braket{\psi|\psi}+\braket{v|UU^{\dagger}|v}-2{\rm Re}\braket{\psi|U^{\dagger}|v}\\
        & =\braket{\psi|\psi}+\frac{1}{NM}\sum\limits_{\alpha,\beta=1}^{N}\sum\limits_{k,l=1}^{M}\braket{y_{\alpha,k}|y_{\beta,l}}-2{\rm Re}\left(\frac{1}{\sqrt{NM}}\sum\limits_{\alpha=1}^{N}\sum\limits_{k=1}^{M}\braket{\psi|y_{\alpha,k}}\right)\\
        & =\braket{\psi|\psi}+\frac{1}{NM}\left(\frac{1}{NM}\times N^{2}M^{2}-\frac{1}{M}\times NM^{2}+NM\right)-2{\rm Re}\left(\frac{1}{\sqrt{NM}}\sum\limits_{\alpha=1}^{N}\sum\limits_{k=1}^{M}\frac{1}{\sqrt{NM}}\right)\\
        & =1+1-2=0, 
    \end{flalign}
    hence $U^{\dagger}\ket{v}=\ket{\psi}$. Let $G$ be the Gram matrix of $\{\ket{y_{\alpha,k}}|\alpha=1,\cdots,N;k=1,\cdots,M\}$. Then 
    \begin{flalign}
        (G^{2})_{\alpha k, \gamma m} & =\sum\limits_{\beta=1}^{N}\sum\limits_{l=1}^{M}G_{\alpha k,\beta l}G_{\beta l, \gamma m}=\sum\limits_{\beta=1}^{N}\sum\limits_{l=1}^{M}\left(\frac{1}{NM}+\delta_{\alpha\beta}\left(\delta_{kl}-\frac{1}{M}\right)\right)\left(\frac{1}{NM}+\delta_{\beta\gamma}\left(\delta_{lm}-\frac{1}{M}\right)\right)\\
        & =\frac{1}{NM}+\frac{1}{NM}\sum\limits_{\beta=1}^{N}\sum\limits_{l=1}^{M}\delta_{\beta\gamma}\left(\delta_{lm}-\frac{1}{M}\right)+\frac{1}{NM}\sum\limits_{\beta=1}^{N}\sum\limits_{l=1}^{M}\delta_{\alpha\beta}\left(\delta_{kl}-\frac{1}{M}\right)\nonumber\\
        & \qquad +\sum\limits_{\beta=1}^{N}\sum\limits_{l=1}^{M}\delta_{\alpha\beta}\delta_{\beta\gamma}\left(\delta_{kl}-\frac{1}{M}\right)\left(\delta_{lm}-\frac{1}{M}\right)\\
        & =\frac{1}{NM}+\frac{1}{NM}\sum\limits_{\beta=1}^{N}\delta_{\beta\gamma}\left(1-\frac{M}{M}\right)+\frac{1}{NM}\sum\limits_{\beta=1}^{N}\delta_{\alpha\beta}\left(1-\frac{M}{M}\right)+\delta_{\alpha\gamma}\sum\limits_{l=1}^{M}\left(\delta_{kl}-\frac{1}{M}\right)\left(\delta_{lm}-\frac{1}{M}\right)\\
        & =\frac{1}{NM}+\delta_{\alpha\gamma}\sum\limits_{l=1}^{M}\left(\delta_{kl}\delta_{lm}-\frac{1}{M}\delta_{kl}-\frac{1}{M}\delta_{lm}+\frac{1}{M^{2}}\right)\\
        & =\frac{1}{NM}+\delta_{\alpha\gamma}\left(\delta_{km}-\frac{1}{M}-\frac{1}{M}+\frac{1}{M}\right)=\frac{1}{NM}+\delta_{\alpha\gamma}\left(\delta_{km}-\frac{1}{M}\right)= G_{\alpha k,\gamma m}\mbox{,}
    \end{flalign}
    which shows that $G^{2}=G$. As a result, we have $U^{\dagger}U=\sum\limits_{\alpha=1}^{N}\sum\limits_{k=1}^{M}\ket{y_{\alpha,k}}\bra{y_{\alpha,k}}$ is a projection from Lemma \ref{lemma2}. \par
    Denote $V:=U-\ket{v}\bra{\psi}$, then we can see that $V\ket{\psi}=0$, $V^{\dagger}\ket{v}=0$ and $V^{\dagger}V=U^{\dagger}U-\ket{\psi}\bra{\psi}$. Thus, $V$ is a partial isometry which satisfies ${\rm ker}(V)^{\perp}\subseteq (\mathbb{C}\ket{\psi})^{\perp}$ and ${\rm ran}(V)={\rm ker}(V^{\dagger})^{\perp}\subseteq(\mathbb{C}\ket{v})^{\perp}$. With the help of this and $\mathcal{E}=\sqrt{\frac{d(d^{2}-1)}{xM^{2}-d}}\ket{v}\bra{\psi}+V$, we complete the proof by using Lemma 2.30 in Ref.\cite{Carlen}.
\end{proof}


Theorem \ref{Result2} shows that the lower bound of concurrence based on symmetric measurements \cite{HFWang1} is always tighter than the one based on realignment \cite{LowBound0}. Moreover, it shows that the lower bounds of $q$-concurrence and $\alpha$-concurrence based on symmetric measurements \cite{NYang} are always tighter than the one based on realignment \cite{LBqc1, LBqc2, LBalphac} too. In addition, it also implies that the symmetric measurement-based criterion for Schmidt numbers \cite{HFWang2} is always stronger than the realignment criterion for Schmidt numbers \cite{CCNRschmidt}. \par

Theorem \ref{Result2} implies a more profound problem. Since the derivation presented in Ref.\cite{HFWang1} is applicable to all conical 2-designs, any conical 2-design can be used to give a lower bound of concurrence. As the realignment can also be understood from the perspective of conical 2-designs, Theorem \ref{Result2} actually compares the lower bounds of concurrence induced by two specific types of conical 2-designs. Consequently, we naturally have the following conjecture.
\begin{conjecture}
    The lower bounds of concurrence induced by arbitrary two distinct conical 2-designs are comparable. 
\end{conjecture}
If the above conjecture is true, then a natural question is what kind of conical 2-designs can achieve the tightest lower bound of concurrence. Furthermore, can this bound detect all the PPT entangled states?

\section{Symmetric measurement-induced lower bounds of genuine tripartite entanglement concurrence}
The genuine multipartite entanglement (GME) concurrence of a tripartite pure state $\ket{\varphi}\in\mathbb{C}^{d}\otimes\mathbb{C}^{d}\otimes\mathbb{C}^{d}$ is defined by \cite{GMEC}, 
\begin{equation}
    C_{\rm GME}(\ket{\varphi})=\sqrt{\min\{1-{\rm tr}(\rho_{1}),1-{\rm tr}(\rho_{2}),1-{\rm tr}(\rho_{3})\}},
\end{equation}
where $\rho_{i}$ is the reduced density matrix of the subsystem $i$. The GME concurrence of a mixed $\rho$ is defined as
\begin{equation}
    C_{\rm GME}(\rho)=\min_{\{p_{i},\ket{\varphi_{i}}\}}\sum\limits_{i}p_{i}C_{\rm GME}(\ket{\varphi_{i}}), 
\end{equation}
where the minimum is taken over all possible ensemble decompositions of $\rho=\sum\limits_{i}p_{i}\ket{\varphi_{i}}\bra{\varphi_{i}}$, $p_{i}\geq 0$ with $\sum\limits_{i}p_{i}=1$. From Theorem 1 in Ref.\cite{GMECLB} and our Theorem \ref{Result1}, we can immediately derive the following result. 

\begin{theorem}
    Let $\{E_{\alpha,k}^{A}|\alpha=1,\cdots,N_{A};\,k=1,\cdots,M_{A}\}$ be an informationally complete $(N_{A}, M_{A})$-POVM with the free parameter $x_{A}$ on $\mathbb{C}^{d}$, $\{E_{\beta,l}^{B}|\beta=1,\cdots,N_{B};\,l=1,\cdots,M_{B}\}$ be an informationally complete $(N_{B}, M_{B})$-POVM with the free parameter $x_{B}$ on $\mathbb{C}^{d}\otimes\mathbb{C}^{d}$, $\rho$ be a tripartite state in $\mathbb{C}^{d}\otimes\mathbb{C}^{d}\otimes\mathbb{C}^{d}$. Denote 
    \begin{equation}
        f_{\mu,\nu}(\rho)=\frac{1}{3}\left(\|\mathcal{M}_{\mu,\nu}^{1|23}(\rho)\|_{\rm tr}+\|\mathcal{M}_{\mu,\nu}^{2|13}(\rho)\|_{\rm tr}+\|\mathcal{M}_{\mu,\nu}^{3|12}(\rho)\|_{\rm tr}\right),
    \end{equation}
    where $\mathcal{M}_{\mu,\nu}^{i|jk}$ standz for the matrix (\ref{key matrix}) under bipartition $i$ and $jk$, $\{i,j,k\}=\{1,2,3\}$. Then we have 
    \begin{equation}
        C_{\rm GME}(\rho)\geq \frac{3}{\sqrt{d(d-1)}}K\left(f_{\mu,\nu}(\rho)-L\right)-2\sqrt{\frac{d-1}{d}},
    \end{equation}
    where $K=\sqrt{\frac{M_{A}(M_{A}-1)M_{B}(M_{B}-1)}{(x_{A}M_{A}^{2}-d)(x_{B}M_{B}^{2}-d^{2})}}$ and $L=\sqrt{|\mu|^{2}+\frac{(d-1)(M_{A}^{2}x_{A}+d^{2})}{dM_{A}(M_{A}-1)}}\sqrt{|\nu|^{2}+\frac{(d^{2}-1)(M_{B}^{2}x_{B}+d^{4})}{d^{2}M_{B}(M_{B}-1)}}$.
\end{theorem}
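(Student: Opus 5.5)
The plan is to combine Theorem 1 of Ref.~\cite{GMECLB} with Theorem~\ref{Result1}, applied once to each of the three bipartitions.

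The cited result says that for a tripartite state on $\mathbb{C}^{d}\otimes\mathbb{C}^{d}\otimes\mathbb{C}^{d}$ the GME concurrence is controlled by the bipartite concurrences. In normalized form we expect
\[
C_{\rm GME}(\rho)\geq \tfrac{1}{2}\bigl(C_{1|23}(\rho)+C_{2|13}(\rho)+C_{3|12}(\rho)\bigr)-2\sqrt{\tfrac{d-1}{d}},
\]
with a prefactor matching the normalization $C_{\rm GME}=\sqrt{\min_i(1-{\rm tr}\rho_i^2)}$. The idea behind it is as follows. For a pure state, $\min_i x_i\ge \sum_i x_i-2\max_i x_i$, where $x_i=\sqrt{1-{\rm tr}\rho_i^2}$, and each $x_i\le\sqrt{(d-1)/d}$. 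Convexity then passes this to mixed states through the optimal decomposition. We also use the relation between $x_i$ and the bipartite concurrence $C_{i|jk}=\sqrt{2}\,x_i$.

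Next, for each bipartition $i|jk$ we apply Theorem~\ref{Result1} with $\mathcal{H}_A=\mathbb{C}^d$ (subsystem $i$, carrying the $(N_A,M_A)$-POVM) and $\mathcal{H}_B=\mathbb{C}^d\otimes\mathbb{C}^d$ (subsystems $jk$, carrying the $(N_B,M_B)$-POVM on $d^2$ dimensions). The substitutions are $d_A=d$, $d_B=d^2$ and $\min\{d_A,d_B\}=d$. These give exactly the constant $K$ with $x_BM_B^2-d^2$ in the statement, and exactly $L$ with $(d^2-1)(M_B^2x_B+d^4)/(d^2M_B(M_B-1))$. Hence
\[
C_{i|jk}(\rho)\ge K\sqrt{\tfrac{2}{d(d-1)}}\bigl(\|\mathcal{M}^{i|jk}_{\mu,\nu}(\rho)\|_{\rm tr}-L\bigr).
\]
Summing over the three bipartitions produces $3f_{\mu,\nu}(\rho)-3L$. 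Multiplying by the prefactor from the first step and using $\sqrt{2}/\sqrt{2}$ normalization gives $\frac{3}{\sqrt{d(d-1)}}K(f_{\mu,\nu}-L)$, and the constant $-2\sqrt{(d-1)/d}$ is carried over unchanged.

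The main obstacle is bookkeeping of normalization constants. Specifically, we must confirm that the exact form of Theorem 1 in \cite{GMECLB} yields the coefficient $3/\sqrt{d(d-1)}$ once combined with the $\sqrt{2/(d(d-1))}$ in Theorem~\ref{Result1}, rather than a coefficient off by $\sqrt{2}$. A second check is that the ordering of subsystems inside $\mathcal{M}^{2|13}$ does not matter, since permuting tensor factors is a unitary change of basis on the $B$ side. If the cited theorem is phrased with the coefficient applied to the average of bipartite concurrences, the factor $3$ comes out of rewriting the sum as $3f$.
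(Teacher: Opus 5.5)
Your route is the paper's: it obtains the statement at once from Theorem 1 of \cite{GMECLB} together with Theorem~\ref{Result1}, applied to the bipartitions $1|23$, $2|13$, $3|12$ with $d_A=d$ and $d_B=d^2$. Your substitutions producing $K$, $L$ and $\min\{d_A,d_B\}=d$ are correct, and so is your remark about reordering the factors on the $B$ side.

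The one real error is the constant you display. With prefactor $\tfrac12$ you would end up with $\tfrac{3K}{\sqrt{2d(d-1)}}\bigl(f_{\mu,\nu}(\rho)-L\bigr)$. No ``$\sqrt2/\sqrt2$ normalization'' turns that into the stated $\tfrac{3K}{\sqrt{d(d-1)}}$. Your own sketch in fact gives prefactor $\tfrac1{\sqrt2}$. Take an optimal GME decomposition $\{p_m,\ket{\varphi_m}\}$ and let $x_i^{(m)}$ be your $x_i$ for $\ket{\varphi_m}$. Using $\min_i x_i\ge\sum_i x_i-2\max_i x_i$, $x_i\le\sqrt{(d-1)/d}$ and $C_{i|jk}(\ket{\varphi_m})=\sqrt2\,x_i^{(m)}$, you get
\[
C_{\rm GME}(\rho)=\sum_m p_m\min_i x_i^{(m)}\ge\sum_m p_m\sum_i x_i^{(m)}-2\sqrt{\tfrac{d-1}{d}}=\tfrac{1}{\sqrt2}\sum_i\sum_m p_m C_{i|jk}(\ket{\varphi_m})-2\sqrt{\tfrac{d-1}{d}}\ge\tfrac{1}{\sqrt2}\sum_i C_{i|jk}(\rho)-2\sqrt{\tfrac{d-1}{d}}.
\]
The last step holds because the GME-optimal decomposition is just one admissible decomposition for each bipartite convex roof. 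Inserting Theorem~\ref{Result1} then gives $\tfrac{1}{\sqrt2}\cdot K\sqrt{\tfrac{2}{d(d-1)}}\cdot 3\bigl(f_{\mu,\nu}(\rho)-L\bigr)=\tfrac{3K}{\sqrt{d(d-1)}}\bigl(f_{\mu,\nu}(\rho)-L\bigr)$, which is exactly the claim. So replace $\tfrac12$ by $\tfrac1{\sqrt2}$ and the argument is complete.
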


\section{CONCLUSIONS AND DISCUSSIONS}
We have derived an extended version of lower bounds of concurrence induced by symmetric measurements, which can tighter than the original one for some states. Just like the original one, it can also be experimentally identified without state tomography. In addition, as a simple corollary of this result, we have presented a lower bound of genuine tripartite entanglement concurrence. Most importantly, we have proven that the Conjecture in Ref.\cite{HFWang1} is true, which explains the fact that numerous previous results based on symmetric measurements are found to be strictly stronger than the ones based on realignment. Furthermore, we conjecture that the essential reason behind this may be that the lower bounds of concurrence induced by arbitrary two distinct conical 2-designs are comparable. If our conjecture is correct, a natural problem is what kind of conical 2-designs can achieve the tightest lower bound of concurrence? We believe it is very worthy of further study. \par
Another interesting problem is for which states the lower bounds of concurrence induced by conical 2‑designs attain the exact value of concurrence. As shown in Ref.\cite{HFWang1}, the lower bounds of concurrence induced by symmetric measurements reach saturation for isotropic states. It may not be accident, we guess that there exists some states such that the lower bound of concurrence induced by arbitrary conical 2-design reach saturation. Besides these problems, it is also worth studying whether the lower bounds of concurrence induced by conical 2-designs can detect all PPT entangled states.

\bigskip 
    \noindent{\bf Acknowlegements}
This work is supported by the National Natural Science Foundation of China (NSFC) under Grants No.12471427 and No.12171044.

\begin{appendices}
\section{The Hermitian basis operators used to construct $(8,2)$-POVM in Example 1}\label{Appendix A}
In Example 1, we used $(8,2)$-POVM with the Hermitian basis operators $G_{\alpha, k}$ given by the following Gell-Mann matrices,
    $$G_{11}=\dfrac{1}{\sqrt{2}}\begin{pmatrix}
    	0 & 1 & 0\\
    	1 & 0 & 0\\
    	0 & 0 & 0
    \end{pmatrix},~~~
    G_{21}=\dfrac{1}{\sqrt{2}}\begin{pmatrix}
    	0 & -\mathrm{i} & 0\\
    	\mathrm{i} & 0 & 0\\
    	0 & 0 & 0
    \end{pmatrix},~~~
    G_{31}=\dfrac{1}{\sqrt{2}}\begin{pmatrix}
    	0 & 0 & 1\\
    	0 & 0 & 0\\
    	1 & 0 & 0
    \end{pmatrix},
$$
$$G_{41}=\dfrac{1}{\sqrt{2}}\begin{pmatrix}
    	0 & 0 & -\mathrm{i}\\
    	0 & 0 & 0\\
    	\mathrm{i} & 0 & 0
    \end{pmatrix},~~~
    G_{51}=\dfrac{1}{\sqrt{2}}\begin{pmatrix}
    	0 & 0 & 0\\
    	0 & 0 & 1\\
    	0 & 1 & 0
    \end{pmatrix},~~~
    G_{61}=\dfrac{1}{\sqrt{2}}\begin{pmatrix}
    	0 & 0 & 0\\
    	0 & 0 & -\mathrm{i}\\
    	0 & \mathrm{i} & 0
    \end{pmatrix},
    $$
    $$G_{71}=\dfrac{1}{\sqrt{2}}\begin{pmatrix}
    	1 & 0 & 0\\
    	0 & -1 & 0\\
    	0 & 0 & 0
    \end{pmatrix},~~~
    G_{81}=\dfrac{1}{\sqrt{6}}\begin{pmatrix}
    	1 & 0 & 0\\
    	0 & 1 & 0\\
    	0 & 0 & -2
    \end{pmatrix}.
    $$
\end{appendices}
\end{document}